\newtheorem{theorem}{Theorem}
\newtheorem{corollary}[theorem]{Corollary}
\newtheorem{lemma}[theorem]{Lemma}
\newtheorem{remark}[theorem]{Remark}
\newcommand{\Xomit}[1]{ }
\newenvironment{proof}[1][Proof]{\textbf{#1.} }{\ \rule{0.5em}{0.5em}}
\newcommand{\eps}{\upvarepsilon}
\begin{document}

\date{}
\title{Online bin packing with cardinality constraints resolved}

\author{J\'anos Balogh \thanks{Department of Applied Informatics, Gyula Juh\'asz Faculty of Education,
     University of Szeged, Hungary. \texttt{balogh@jgypk.u-szeged.hu}} \and J\'ozsef B\'ek\'esi \thanks{Department of Applied Informatics, Gyula Juh\'asz Faculty of Education,
     University of Szeged, Hungary. \texttt{bekesi@jgypk.u-szeged.hu}} \and Gy\"{o}rgy
D\'{o}sa\thanks{Department of Mathematics, University of Pannonia,
Veszprem, Hungary, \texttt{dosagy@almos.vein.hu}. } \and Leah
Epstein\thanks{ Department of Mathematics, University of Haifa,
Haifa, Israel. \texttt{lea@math.haifa.ac.il}. } \and Asaf Levin\thanks{Faculty of Industrial Engineering and Management, The Technion, Haifa, Israel. \texttt{levinas@ie.technion.ac.il.}}}
\maketitle

%

\begin{abstract}
Cardinality constrained bin packing or bin packing with
cardinality constraints is a basic bin packing problem. In the
online version with the parameter $k \geq 2$, items having sizes
in $(0,1]$ associated with them are presented one by one to be
packed into unit capacity bins, such that the capacities of bins
are not exceeded, and no bin receives more than $k$ items. We
resolve the online problem in the sense that we prove a lower
bound of $2$ on the overall asymptotic competitive ratio. This
closes this long standing open problem, since an algorithm of an absolute competitive
ratio $2$ is known. Additionally, we significantly improve the
known lower bounds on the asymptotic competitive ratio for every specific
value of $k$. The novelty of our constructions is based on full adaptivity that creates large gaps between item sizes. Thus, our lower bound inputs do not follow the common practice for online bin packing problems of having a known in advance input consisting of batches for which the algorithm needs to be competitive on every prefix of the input. 
\end{abstract}

%

\section{Introduction}\label{intro}
Bin packing with cardinality constraints (CCBP, also called
cardinality constrained bin packing) is a well-known variant of
bin packing \cite{KSS75,KSS77,KP99,CKP03,Epstein05,EL07afptas}. In
this problem, a parameter $k$ is given. Items of indices
$1,2,\ldots,n$, where item $i$ has a size $s_i \in (0,1]$ are to
be split into subsets called bins, such that the total size of
items packed into each bin is at most $1$, and no bin has more
than $k$ items. In the standard bin packing problem, only the
first condition is required.

CCBP is a special case of vector packing (VP) \cite{GareyGJ76}. In
VP with dimension $d \geq 2$, a set of items, where every item is
a non-zero $d$-dimensional vector whose components are rational
numbers in $[0,1]$, are to be split into subsets (called bins in
this case as well) such that the vector sum of every subset does
not exceed $1$ in any component. Given an input for CCBP, an input
for VP is created as follows. For every item, let the first
component be $s_i$, the second component is $\frac 1k$, and the
remaining components are equal to zero (or to $\frac 1k$).

In this paper we study online algorithms, which receive input
items one by one, and pack each new item irrevocably before the
next item is presented, into an empty (new) bin or non-empty bin.
Such algorithms receive an input as a sequence, while offline
algorithms receive an input as a set. By the definition of CCBP,
an item $i$ can be packed into a non-empty bin $B$ if the packing
is feasible both with respect to the total size of items already
packed into that bin and with respect to the number of packed
items, i.e., the bin contains items of total size at most $1-s_i$
and it contains at most $k-1$ items. An optimal offline algorithm, which
uses a minimum number of bins for packing the items, is denoted by
$OPT$. For an input $L$ and algorithm $A$, we let $A(L)$ denote
the number of bins that $A$ uses to pack $L$. We also use $OPT(L)$
to denote the number of bins that $OPT$ uses for a given input
$L$.  The absolute competitive ratio of an algorithm $A$ is the
supremum ratio over all inputs $L$ between the number of its bins
$A(L)$ and the number of the bins of $OPT$, $OPT(L)$. The
asymptotic approximation ratio is the limit of absolute
approximation ratios $R_K$ when $K$ tends to infinity and $R_K$
takes into account only inputs for which $OPT$ uses at least $K$
bins, that is the asymptotic competitive ratio of $A$ is $$\lim_{K
\rightarrow \infty} \sup_{OPT\geq K} \frac{A(L)}{OPT(L)} \ . $$
The term {\it competitive ratio} is used for online algorithms
instead of {\it approximation ratio} and it is equivalent. In this
paper we mostly deal with the asymptotic competitive ratio, and
also refer to it by the term competitive ratio. When we discuss
the absolute competitive ratio, we use this last term explicitly.

In this paper, we resolve the long standing open problem of online
CCBP, in the sense that we find the best overall asymptotic and
absolute competitive ratios. An algorithm with an asymptotic
competitive ratios of $2$ has been designed by Babel et al.
\cite{BCKK04}, and a similar algorithm was shown to have an
absolute competitive ratio of $2$ \cite{BDE}. However, prior to
this work, all lower bounds were strictly smaller than the best
lower bounds for standard bin packing \cite{Vliet92,BBG}. With the
exception of the case $k=2$ for which simple algorithms have
competitive ratios of $1.5$ \cite{KSS75,Epstein05}, and a more
sophisticated algorithm has a competitive ratio of at most
$1.44721$ \cite{BCKK04}, all lower bounds on the competitive ratio
were implied by partial inputs of inputs used to prove lower
bounds for standard bin packing \cite{Yao80A,Vliet92,BBG} (such
lower bounds can be used for $k \geq \frac {1}{\delta}$ when all
items have sizes no smaller than $\delta$, for a fixed value
$\delta>0$), and modifications of such inputs
\cite{BCKK04,FK13,BDE}. That is, all lower bounds had the form
where a number of lists may be presented, each list has a large
number of items of a certain size (the sequence of sizes of the
different list is increasing, and the numbers of items in the
lists are not necessarily equal). The unknown factor is the number
of presented lists, that is, the input can stop after any of the
lists. See Table \ref{tabtab} for values of previously known lower
bounds.

{\begin{table}[b!] {\small
\renewcommand{\arraystretch}{1.2   }
$$
\begin{array}{||c||c||c||}
\hline \hline

\mbox{Value of \  } k & \mbox{previous lower bound} &  \mbox{new lower bound}  \\

\hline

2 & 1.42764 ~~~ \cite{FK13} &  \frac{10}7\approx 1.42857  \\
\hline
3 &  \frac 32= 1.5 ~~~ \cite{BCKK04}  &   $1.55642$\\
\hline
4 &  \frac 32= 1.5 ~~~  \cite{FK13}&  $1.63330$ \\
\hline
5 & \frac 32=1.5 ~~~ \cite{BDE} & $1.69776$ \\
\hline
6 &  \frac 32=1.5  ~~~ \cite{Yao80A} & $1.74093$ \\
\hline

7 &   \frac{217}{143}\approx 1.51748  ~~~ \cite{BDE} &  $1.77223$ \\
\hline
8 &   \frac{32}{21}\approx 1.52381 ~~~ \cite{BDE} &   $1.79634$ \\
\hline
9 &  \frac{189}{124}\approx 1.524194 ~~~ \cite{BDE}  & $1.81563$\\
\hline
10 &  \frac{235}{154}\approx 1.52597  ~~~ \cite{BDE} & $1.83148$ \\
\hline

200000 &  1.54037  ~~~ \cite{BBG}  & $1.99999$ \\
\hline

k \rightarrow \infty  &   \frac{248}{161}\approx1.54037  ~~~ \cite{BBG} & {\large\boldsymbol{2}}   \\

\hline \hline
\end{array}
$$}
\caption{\label{tabtab} Bounds for $2 \leq k \leq 10$. The middle
column contains the previously best known asymptotic lower bounds on
the asymptotic competitive ratio. The right column contains our
improved lower bounds.}
\end{table}
}

In this work, we take a completely different approach for proving
lower bounds, where many of the item sizes are based on the complete and precise
action of the algorithm up to the time it is presented. While a
similar approach was used for the very limited special case of $k=2$ in the past
\cite{Blitz,BCKK04,FK13}, it was unclear how and if it could be used for
$k>2$. In a nutshell, in these lower bound sequences for $k=2$, sub-inputs were
constructed such that items packed in certain ways (for example,
as the second item of a bin) had much larger sizes than items of
the same sub-input packed in other ways. Here, we generalize the
approach for larger values of $k$ by defining careful
constructions where {\em sufficiently large multiplicative gaps} are
created. This requires much more delicate procedures where item
sizes are defined.

Additionally, we improve the lower bounds for all values of $k$,
and in particular, prove lower bounds above the best known lower
bound on the competitive ratio for standard online bin packing,
$1.54037$ \cite{BBG} for $k \geq 3$. Already for $k=3$ our
lower bound is above $1.55$, and already for $k=4$, our lower
bound is above the competitive ratio of many algorithms for
standard online bin packing (see for example
\cite{RaBrLL89,Seiden02J}).

Our result provides, in particular, a lower bound of $2$ for the
asymptotic competitive ratio of vector packing in two dimensions.
The previously known lower bounds for vector packing are as
follows. The best results for constant dimensions are fairly low,
and tending to $2$ as the dimension $d$ grows to infinity
\cite{GaKeWo94,BlVlWo96,Blitz}, while a lower bound of
$\Omega(d^{1-\eps})$ was given by Azar et al. \cite{ACKS13} for the case where both $d$ and the optimal cost are  functions of a common parameter $n$ that grow to infinity when $n$ grows to infinity, and thus this result does not give any lower bound on the competitive ratio for constant values of $d$  (see
also \cite{ACFR16} for results on vectors with small components).
In particular, the best lower bound for $d=2$ prior to this work
was $1.67117$ \cite{GaKeWo94,BlVlWo96,Blitz}. An upper bound of
$d+0.7$ on the competitive ratio is known \cite{GareyGJ76}.

Note that the offline CCBP problem is NP-hard in the strong sense,
and approximation schemes are known for it \cite{CKP03,EL07afptas}.
We note that for online CCBP, it is sometimes the case that the competitive
ratio for CCBP is larger by $1$ with comparison to that of
standard bin packing \cite{KSS75,JDUGG74,LeeLee85,Epstein05}.
Interestingly, this is not the case here.

\paragraph{Paper outline.} We discuss general properties in Section \ref{prem}, and we define
procedures for constructing sub-inputs in Section \ref{stru}. Our
main result, an overall lower bound of $2$ on the competitive
ratio of any online algorithm for CCBP is proved in Section
\ref{mainres}, and improved lower bounds for fixed values of $k$
are given in Section \ref{fixd}.

\section{Preliminaries}\label{prem}
The analysis of the lower bounds on the asymptotic competitive ratio of online algorithms will be based on the following lemma that basically allows us to disregard a constant number of bins in the costs of the optimal solution and the solution returned by the algorithm.

\begin{lemma}\label{modcr}
Consider an algorithm $ALG$, such that the asymptotic competitive
ratio of the algorithm $ALG$ is at most $R$, where $R \geq 1$ is a
fixed value, and let $f(n)$ denote a positive function such that
$f(n)=o(n)$ and for any input, $ALG(I) \leq R \cdot
OPT(I)+f(OPT(I))$. Let $C_a \geq 0$, $C_b \geq 0$ be constants.
Assume that for a given integer $N_0$, for any integer $n \geq
N_0$ there is an input $I^n$ for which $OPT(I^n)=\Omega(n)$, then
we have $$R \geq \limsup_{n \rightarrow \infty}
\frac{ALG(I^n)+C_a}{OPT(I^n)-C_b} \ . $$
\end{lemma}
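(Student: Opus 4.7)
The plan is to directly substitute the hypothesized upper bound $ALG(I) \leq R \cdot OPT(I) + f(OPT(I))$ into the numerator of the target ratio and argue that the excess over $R$ vanishes in the limit. Intuitively, an additive constant $C_a$ in the numerator, an additive constant $C_b$ in the denominator, and the sublinear term $f(OPT(I^n))$ are all $o(OPT(I^n))$, so they are asymptotically negligible once $OPT(I^n)$ grows to infinity, which the hypothesis $OPT(I^n) = \Omega(n)$ guarantees along the sequence $\{I^n\}$.

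Concretely, first I would observe that since $OPT(I^n) = \Omega(n)$, there exists $N_1 \geq N_0$ so that for all $n \geq N_1$ we have $OPT(I^n) > C_b$, which makes the denominator positive and the ratio well-defined. Next I would plug in the hypothesis on $ALG$ and algebraically isolate $R$:
\begin{align*}
\frac{ALG(I^n) + C_a}{OPT(I^n) - C_b}
&\leq \frac{R \cdot OPT(I^n) + f(OPT(I^n)) + C_a}{OPT(I^n) - C_b} \\
&= R + \frac{R \cdot C_b + f(OPT(I^n)) + C_a}{OPT(I^n) - C_b}.
\end{align*}
Then I would let $n \to \infty$. Since $C_a, C_b, R$ are fixed constants, $OPT(I^n) \to \infty$, and $f(m)/m \to 0$ by the assumption $f(m) = o(m)$, the fractional correction term tends to $0$. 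Taking the limsup of both sides yields the claimed inequality
$$R \geq \limsup_{n \rightarrow \infty} \frac{ALG(I^n)+C_a}{OPT(I^n)-C_b}.$$

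There is no real obstacle here; the lemma is a bookkeeping device that formalizes the standard fact that additive constants and any $o(OPT)$ slack do not affect the asymptotic competitive ratio. The only thing to be a little careful about is the sign of the denominator for small $n$, which is handled simply by restricting attention to $n \geq N_1$ before taking the limsup. Since the competitive-ratio definition given earlier in the paper involves the supremum over $OPT \geq K$ as $K \to \infty$, the hypothesis on $f$ is essentially a restatement of "$ALG$ has asymptotic competitive ratio at most $R$" in a uniform form, and no further care is needed.
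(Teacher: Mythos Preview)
Your proof is correct and follows essentially the same approach as the paper: substitute the hypothesis $ALG(I^n)\le R\cdot OPT(I^n)+f(OPT(I^n))$, split off $R$, and observe that the remaining error term vanishes in the limit. The only cosmetic difference is that the paper normalizes by $n$ (writing $\frac{ALG(I^n)+C_a}{n}\le R\cdot\frac{OPT(I^n)-C_b}{n}+\frac{C_a+R\,C_b+f(OPT(I^n))}{n}$) whereas you normalize directly by $OPT(I^n)-C_b$; your choice is arguably cleaner since it uses only $OPT(I^n)\to\infty$ and $f(m)=o(m)$, without implicitly relying on $OPT(I^n)=O(n)$ to conclude $f(OPT(I^n))=o(n)$.
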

\begin{proof}
We have $$\frac{ALG(I^n)+C_a}{n} \leq R \cdot
\frac{OPT(I^n)-C_b}{n}+\frac{C_a+R \cdot
C_b}{n}+\frac{f(OPT(I^n))}{n}$$ for any $n \geq N_0$.

Since $ALG(I^n) + C_a \geq OPT(I^n) -C_b $ and $OPT(I^n)-C_b = \Omega(n)$ while $C_a+R \cdot C_b+f(OPT(I^n))=o(n)$, letting $n$ grow to infinity implies that $R \geq \limsup_{n \rightarrow \infty}
\frac{ALG(I^n)+C_a}{OPT(I^n)-C_b}$.
\end{proof}

\medskip

In what follows, we will use Lemma \ref{modcr} as follows. We
construct inputs whose size depends on a parameter $N$, so that
the costs of optimal solutions increase with the input size. We
will compare the cost of the algorithm plus a suitable
non-negative constant to the optimal cost minus a suitable
non-negative constant by considering their ratio.

\section{Constructions of sub-inputs}\label{stru}
In this section we introduce the core of our lower bound
constructions. In such constructions, we {\em adaptively} present inputs that are
based on the behavior of the algorithm. More specifically, we
define several procedures that construct sub-inputs according to
certain conditions. Similarly to \cite{BCKK04,FK13,Blitz} (and
other work on online problems), a new input item is presented at
each time, where its size is based on the action of the algorithm on previous items.
For example, if the previous item was packed into an empty bin,
the size of the next item is different from the size that would be
used if the previous item is added to a non-empty bin. In order to
ensure that the properties are satisfied, we will define
invariants, and we will prove the specific properties that we need in the sequel via
induction.

In the first procedure, the most important property is that there
will be a gap between two types of items constructed by applying
the procedure, in the sense that the procedure creates items that
will be called small and items that will be called large, any
large item is larger than any small item, and there is a
requirement on the size ratio that will be satisfied (a
multiplicative gap between the size of the smallest large item and
the largest small item). Such constructions differ from previous
work  \cite{BCKK04,FK13,Blitz} where only an additive gap was
created. The gap was always positive, but it could be arbitrarily
small. In particular, one limitation was that it was unknown how
such an approach could be used for $k>2$.

We will also use this method to construct sub-inputs with large
items, such that there is a multiplicative gap in the
differences between $1$ and the items sizes. This new method will
allow us to exploit the major differences in sizes of items of
different kinds to provide a tight overall result and new and
significantly improved lower bounds on the asymptotic competitive
ratio for fixed values of $k$.

\subsection{Procedure SMALL}
In this first procedure called SMALL, a rational value $0<\eps
\leq 1$, and an integer upper bound $N$ on the number of items to be presented are
given. The goal is to present (at most) $N$ items of sizes in
$(0,\eps]$, such that every item will be seen as either a small
item or a large item, and such that any large item is more than
$k$ times larger than any small item. In fact, a stronger
requirement on the item sizes will hold. Moreover, all item sizes
will be rational. Given two logical conditions, $C_1$ and $C_2$
specified for each construction (such that for every packed item,
exactly one of them holds), a new item will be defined as {\em
small} if $C_1$ holds and it will be defined as {\em large} if
$C_2$ holds. There is a third condition $C_3$ that is based on the
packing of the prefix of items introduced so far, and the
sub-input is stopped if $C_3$ holds.

Let $N$ be the upper bound on the number of items that will be created by the procedure. Let $N' \leq N$ be the number of items (where $N$ is known in advance while $N'$ is not necessarily known in advance and it becomes known when $C_3$ holds for the first time).
The item sizes $a_1,a_2,\ldots,a_{N'}$ will be defined based on
another sequence $x_1,x_2,\ldots,x_{N'}$, such that $a_i=\eps \cdot
k^{-x_i}$ for $1 \leq i \leq {N'}$. The values $x_i$ will be integral in order to ensure that the values $a_i$ will be rational. There will also be two sequences of values $\tau_1,\ldots,\tau_{N'}$ and $\rho_1,\ldots,\rho_{N'}$, representing thresholds on item sizes of further items.

Let $\tau_0= 2^{N+2}$, $\rho_0= 2^{N+3}$, and $i=1$. The process
is defined as follows for any given value of $i$ (such that $1
\leq i \leq N'$). Let $x_i=\frac{\tau_{i-1}+\rho_{i-1}}2$ (we will
show that these values are integers). After the algorithm packs
item $i$, if $C_1$ holds, let $\tau_i=\tau_{i-1}$ and $\rho_i=x_i$
and if $C_2$ holds, let $\tau_i=x_i$ and $\rho_i=\rho_{i-1}$. If
$C_3$ holds or $i=N$, stop and otherwise increase $i$ by $1$.

Intuitively, the process is as follows. The interval $(\tau_i,\rho_i)$ contains the $x_j$ values of all further items (with $j>i$), and for $j \leq i$, all items satisfying $C_1$ have $x_j$ values in $[\rho_i,\rho_0)$ and all items satisfying $C_2$ have $x_j$ values in $(\tau_0,\tau_i]$. In each iteration $i$, the new values $\tau_i,\rho_i$ are defined such that these requirements are satisfied. In particular, the
$x_i$ values of any item satisfying $C_1$ are larger than those of items satisfying $C_2$. Next, we establish the invariants of this procedure.

\begin{lemma}
Let $N'$ be the number of items. For any $i$ such that $1 \leq i \leq N'$, $\rho_i \leq \rho_{i-1}$ and $\tau_i \geq \tau_{i-1}$. Additionally, we have $\rho_i-\tau_i = 2^{N+2-i}$, all $x_i$ values are integral, if item $i$ satisfies $C_1$,
$x_i \geq \rho_{N'}$ and otherwise $x_i \leq \tau_{N'}$.
\end{lemma}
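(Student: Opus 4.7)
My plan is to prove all four claims simultaneously by induction on $i$, since the four invariants $\rho_i \le \rho_{i-1}$, $\tau_i \ge \tau_{i-1}$, $\rho_i - \tau_i = 2^{N+2-i}$, and integrality of $x_i$ reinforce each other in the inductive step, and the last claim (about $C_1$ versus $C_2$ items) follows from the first three by a short separate argument at the end.

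For the base case $i=0$, I would observe directly that $\rho_0 - \tau_0 = 2^{N+3} - 2^{N+2} = 2^{N+2}$, and $\tau_0, \rho_0 \in \mathbb{Z}$. Note that the monotonicity statements are vacuous at $i=0$. For the inductive step, suppose the invariants hold for index $i-1$ where $1 \le i \le N'$. Since $N' \le N$, we have $N+3-i \ge 3$, so $2^{N+3-i}$ is an even integer. Using the inductive assumption that $\tau_{i-1}, \rho_{i-1} \in \mathbb{Z}$ with $\rho_{i-1} - \tau_{i-1} = 2^{N+3-i}$, I can write $\tau_{i-1} + \rho_{i-1} = 2\tau_{i-1} + 2^{N+3-i}$, which is even, giving $x_i = \tau_{i-1} + 2^{N+2-i} \in \mathbb{Z}$.

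Next I would split on which of $C_1, C_2$ holds after item $i$ is packed. If $C_1$ holds, then $\tau_i = \tau_{i-1}$ and $\rho_i = x_i = \tau_{i-1} + 2^{N+2-i} = \rho_{i-1} - 2^{N+2-i}$; hence $\rho_i \le \rho_{i-1}$, $\tau_i \ge \tau_{i-1}$, and $\rho_i - \tau_i = 2^{N+2-i}$. If $C_2$ holds, then $\rho_i = \rho_{i-1}$ and $\tau_i = x_i = \tau_{i-1} + 2^{N+2-i}$; hence $\rho_i \le \rho_{i-1}$, $\tau_i \ge \tau_{i-1}$, and $\rho_i - \tau_i = 2^{N+3-i} - 2^{N+2-i} = 2^{N+2-i}$. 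In either case all integrality and monotonicity conditions are preserved and the difference invariant holds.

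Finally, for the last claim, if item $i$ satisfies $C_1$ then by construction $\rho_i = x_i$; combined with the monotonicity $\rho_{N'} \le \rho_{N'-1} \le \cdots \le \rho_i$, this gives $x_i = \rho_i \ge \rho_{N'}$. Symmetrically, if item $i$ satisfies $C_2$ then $\tau_i = x_i$ and $\tau_i \le \tau_{i+1} \le \cdots \le \tau_{N'}$, so $x_i \le \tau_{N'}$. The main obstacle is really just bookkeeping: one must be careful that $2^{N+3-i}$ remains even throughout, which relies on $N' \le N$; once this is clear, integrality of $x_i$ and the arithmetic identities for $\rho_i - \tau_i$ follow immediately from the two updates.
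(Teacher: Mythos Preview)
Your proof is correct and follows essentially the same approach as the paper: an induction on $i$ establishing integrality and the difference invariant $\rho_i-\tau_i=2^{N+2-i}$, a case split on $C_1$ versus $C_2$ for the update, and then the monotonicity of the sequences $(\rho_i)$ and $(\tau_i)$ to deduce the final claim about $x_i$. Your write-up is in fact slightly cleaner in that you correctly compute $x_i=\tau_{i-1}+2^{N+2-i}$ (the paper has a harmless typo there, writing $2^{N+3-i}$), and you make the parity of $\rho_{i-1}-\tau_{i-1}$ explicit rather than implicit.
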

\begin{proof}
We start with showing that the $x_i$ values as well as $\rho_i$
and $\tau_i$ are integral and $\rho_i-\tau_i = 2^{N+2-i}$. We
prove this by induction. Indeed $\rho_0=2^{N+3}$ that is integral,
$\tau_0=2^{N+2}$ that is an integer as well.  Furthermore,
$\rho_0-\tau_0= 2^{N+2}$, and $x_1=3 \cdot 2^{N+1}$ that is an
integer, and no matter if the first item satisfies $C_1$ or $C_2$,
we have that both $\rho_1$ and $\tau_1$ are integers, and
$\rho_1-\tau_1= 2^{N+1}$. Thus, the cases $i=0$ and $i=1$ for the
induction claim hold.  Assume that $\rho_{i-1}-\tau_{i-1} =
2^{N+3-i}$ holds for some $i$ where $1 \leq i \leq N'-1$. Then,
$$x_i=\frac{\tau_{i-1}+\rho_{i-1}}2=\tau_{i-1}+\frac{\rho_{i-1}-\tau_{i-1}}2=\tau_{i-1}+2^{N+3-i} \ , $$
which is an integer for $1 \leq i \leq N$, since $\tau_{i-1}$ is
an integer. Moreover, if $\tau_i=\tau_{i-1}$ and $\rho_i=x_i$,
then $\rho_i-\tau_i=x_i-\tau_{i-1}=\frac{\rho_{i-1}-\tau_{i-1}}2$,
and otherwise $\tau_i=x_i$ and $\rho_i=\rho_{i-1}$, then
$\rho_i-\tau_i=\rho_{i-1}-x_i=\frac{\rho_{i-1}-\tau_{i-1}}2$. In
both cases, $\rho_i-\tau_i=2^{N+2-i}$ and both $\tau_i$ and
$\rho_i$ are integers. Since, in particular, for any $i$,
$\rho_i>\tau_i$ holds and $x_{i+1}$ is their average, we find
$\tau_i<x_{i+1}<\rho_i$. Thus, $\rho_i \leq \rho_{i-1}$ and $\tau_i
\geq \tau_{i-1}$ holds for any $i$.

Finally, since in the case that item $i$ satisfies $C_1$, we let $\rho_i=x_i$, and in the case that item $i$ satisfies $C_2$, we let $\tau_i=x_i$, we get $x_i = \rho_i \geq \rho_{i+1} \geq \ldots \geq \rho_{N'}$ in the first case, and $x_i = \tau_i \leq \tau_{i+1} \leq \ldots \leq \tau_{N'}$ in the second case.
\end{proof}

\begin{corollary}
For any item $i$, $a_i \in  \left( \eps \cdot
k^{- 2^{N+3}} , \eps \cdot
k^{-2^{N+2}} \right)$. For any item $i_1$ satisfying $C_1$ and any item $i_2$ satisfying $C_2$, it holds that $\frac{a_{i_2}}{a_{i_1}} > k$.
\end{corollary}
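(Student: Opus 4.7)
The plan is to deduce both parts of the corollary directly from the invariants established in the preceding lemma, treating the sequence $x_i$ as the primary object and then translating inequalities about $x_i$ into inequalities about $a_i=\eps\cdot k^{-x_i}$ by exponentiation. Since $k\geq 2$, the map $x\mapsto k^{-x}$ is strictly decreasing on the reals, so strict inequalities on $x_i$ carry over (reversed) to strict inequalities on $a_i$, and differences of $x_i$ values correspond to multiplicative ratios via $a_{i_2}/a_{i_1}=k^{x_{i_1}-x_{i_2}}$.

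For the two-sided enclosure, I would observe that by the definition of the procedure, $x_i=(\tau_{i-1}+\rho_{i-1})/2$ is the midpoint of the interval $[\tau_{i-1},\rho_{i-1}]$. Combining this with the monotonicity statements $\tau_0\leq\tau_{i-1}$ and $\rho_{i-1}\leq\rho_0$ from the lemma, together with the halving invariant $\rho_{i-1}-\tau_{i-1}=2^{N+3-i}>0$ for $1\leq i\leq N$ (so that the midpoint is strictly interior), one obtains the chain $2^{N+2}=\tau_0\leq\tau_{i-1}<x_i<\rho_{i-1}\leq\rho_0=2^{N+3}$, in which both strict inequalities in the middle survive. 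Applying $k^{-x}$ then yields exactly $\eps\cdot k^{-2^{N+3}}<a_i<\eps\cdot k^{-2^{N+2}}$.

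For the multiplicative gap, I would quote directly from the lemma: any item $i_1$ satisfying $C_1$ has $x_{i_1}\geq\rho_{N'}$, and any item $i_2$ satisfying $C_2$ has $x_{i_2}\leq\tau_{N'}$. Subtracting and invoking the halving invariant gives $x_{i_1}-x_{i_2}\geq\rho_{N'}-\tau_{N'}=2^{N+2-N'}$, and since $N'\leq N$ this is at least $2^2=4$. Therefore $a_{i_2}/a_{i_1}=k^{x_{i_1}-x_{i_2}}\geq k^4$, which is strictly greater than $k$ for all $k\geq 2$, as claimed.

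I do not anticipate any real obstacle: once the lemma is in hand, the corollary is essentially a one-line exponentiation together with the observation that $\rho_j-\tau_j$ is a strictly positive power of two throughout the run. The only place requiring attention is the bookkeeping of strictness versus non-strictness in the chain $\tau_0\leq\tau_{i-1}<x_i<\rho_{i-1}\leq\rho_0$, since the corollary asserts membership in an open interval, and this strictness is precisely what the invariant $\rho_{i-1}-\tau_{i-1}=2^{N+3-i}>0$ provides.
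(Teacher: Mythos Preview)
Your proof is correct and follows essentially the same approach as the paper's: the paper dismisses the first claim as holding ``by definition'' (which your midpoint-plus-monotonicity argument makes explicit), and for the second claim it invokes $x_{i_1}\geq\rho_{N'}$, $x_{i_2}\leq\tau_{N'}$, and $\rho_{N'}-\tau_{N'}=2^{N+2-N'}\geq 4$ to conclude $a_{i_2}/a_{i_1}\geq k^4>k$, exactly as you do.
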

\begin{proof}
The first claim holds by definition. Since we have $x_{i_1} \geq \rho_{N'}$ and $x_{i_2} \leq \tau_{N'}$, we get
$\frac{a_{i_2}}{a_{i_1}} > k^{\rho_{N'}-\tau_{N'}}$, Using
$\rho_{N'}-\tau_{N'} = 2^{N+2-N'} \geq 4$ as $N' \leq N$, we find $\frac{a_{i_2}}{a_{i_1}} \geq k^4 > k$.
\end{proof}

Note that it is possible that the constructed input is such that
there are only items satisfying $C_1$ or only items satisfying
$C_2$.

\subsection{Procedure LARGE}
The second type of input is such that all items have sizes in
$(1-\eps,1)$ for a given value $\eps>0$. The construction is the
same as before, but the size of the $i$th item is $b_i=1-a_i$. The
terms ``small'' and ``large'' refer to difference between the size
of the item and $1$.

\begin{corollary}
All item sizes $b_i$ for $1 \leq i \leq N$ are in $\left( 1-\eps \cdot
k^{-2^{N+2}},1- \eps \cdot k^{-2^{N+3}}\right)$.

The sizes of  any small item $i_s$ and any large item
$i_l$ satisfy $1-b_{i_l}> k \cdot (1-b_{i_s})$.
\end{corollary}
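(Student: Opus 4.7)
The plan is to observe that Procedure LARGE reuses verbatim the adaptive construction of Procedure SMALL, including the auxiliary sequences $x_i$, $\tau_i$, $\rho_i$ and the resulting values $a_i = \eps \cdot k^{-x_i}$; the only change is that the presented item size is $b_i = 1 - a_i$ rather than $a_i$ itself. Consequently, the invariants of the previous lemma still hold for the sequences $x_i, \tau_i, \rho_i$ associated with the run of LARGE, and the corollary for SMALL is immediately applicable to the underlying values $a_i$.

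First I would establish the size bounds. By the corollary for Procedure SMALL applied to the (identically defined) values $a_i$, we have $a_i \in \left(\eps \cdot k^{-2^{N+3}}, \eps \cdot k^{-2^{N+2}}\right)$ for every $1 \leq i \leq N$. Substituting $b_i = 1 - a_i$ and flipping the inequalities yields $b_i \in \left(1 - \eps \cdot k^{-2^{N+2}},\, 1 - \eps \cdot k^{-2^{N+3}}\right)$, which is exactly the first claim.

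Next I would justify the multiplicative gap. Here the key point to get right is the translation of the terminology: by the convention stated in the paragraph introducing LARGE, the labels ``small'' and ``large'' refer to the quantity $1 - b_i = a_i$. Thus a small item in LARGE is one satisfying $C_1$ (so that its value $x_i$ is pushed to the $\rho$ side and $a_i$ is small), while a large item is one satisfying $C_2$ (its $x_i$ is pushed to the $\tau$ side and $a_i$ is large). Hence for any small item $i_s$ and any large item $i_l$ produced by LARGE we have $1 - b_{i_s} = a_{i_s}$ and $1 - b_{i_l} = a_{i_l}$, with $i_s$ satisfying $C_1$ and $i_l$ satisfying $C_2$. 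Applying the second part of the SMALL corollary gives $\frac{a_{i_l}}{a_{i_s}} > k$, i.e., $1 - b_{i_l} > k \cdot (1 - b_{i_s})$, as required.

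The main (and only) obstacle is really bookkeeping rather than mathematics: one must be careful that ``small'' in LARGE aligns with $C_1$ (small $a_i$, i.e., size $b_i$ close to $1$) and not with the intuition that a small item should have a small value of $b_i$. Once this correspondence is correctly identified, the proof reduces to a single substitution $b_i = 1 - a_i$ into the conclusions already established for Procedure SMALL, so there is no additional computation or induction to perform.
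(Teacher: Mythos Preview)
Your proposal is correct and matches the paper's approach: the paper states this corollary without proof precisely because it follows immediately from the SMALL corollary via the substitution $b_i = 1 - a_i$, together with the convention that ``small'' and ``large'' refer to the quantity $1 - b_i$. Your careful check that a small item in LARGE corresponds to $C_1$ (and hence to a small $a_i$) is exactly the bookkeeping needed, and nothing more is required.
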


%
%
%

\subsection{Procedure SMALLandLARGE}
We will also use a procedure where the conditions $C_1$
and $C_2$ are not fixed, and they are based on additional
properties of the packing and the input that has been presented so
far. Moreover, in this case the size of each item is based on
$a_i$, but it is fixed for each item separately (it will be either
$a_i$ or $1-a_i$).  In this construction the sub-input will be decomposed into {\em parts} where for an item of an odd indexed part the size of the item will be $1-a_i$, whereas for an item of an even indexed part the size of the item will be $a_i$.  The definitions of $C_1$ and $C_2$ will also depend on the parity of the index of the part containing the item. This procedure is called SMALLandLARGE.


%
%


\section{A lower bound of $\boldsymbol{2}$}\label{mainres}

Let $N$ be a large integer. Apply procedure SMALL with $\eps=1$
for the construction of $N$ items (i.e., condition $C_3$ never happens). The condition $C_2$ is that the item is packed as the first item of some bin (into an empty bin), and the condition $C_1$ is that the item is packed into a non-empty bin.
The item sizes are no larger
than $\frac 1 {k^4}$.
The multiplicative
gap between the smallest large item and the largest small item is larger than $k$.
The $N$ items presented so far will be called the first phase
items. Let $\delta>0$ denote the largest size of any first phase
item packed not as a first item of a bin (the largest small item).
Let $\alpha=k\cdot \delta$. Any first phase item that is packed as
the first item of a bin (a large item) has size strictly above
$\alpha$. Let $\Delta<\frac 1{k^3}$ be the largest size of any
first phase item. Obviously, $1-k\Delta
> 1-\frac{1}{k^2} > \frac 12$.

For the first phase items, let $X_k$ denote the number of bins
packed by the algorithm that contain $k$ items, and let $Y$ denote
the number of other bins (such that there are $X_k+Y$ bins in
total after $N$ items have been presented).

The first phase items are followed by another set of items called
the second phase items. This set of items is selected out of two
possible options. The first option is that $\lceil \frac{N}{k-1}
\rceil$ items of size $1-k\Delta$ arrive, and the second option is
that $\lceil \frac{ N-X_k-Y}{k-1} \rceil$ items of size
$1-\alpha=1-k\delta$ arrive. In both cases it is possible to
create a solution offline such that each bin (except for possibly
two bins) has $k$
items. In the first case, an
offline solution has $\lceil \frac{N}{k-1} \rceil$ bins, each with
one item of size $1-k\Delta$ and an arbitrary subset of $k-1$
first phase items (the last bin may have a smaller number of such
items).  Such a solution is obviously optimal. In the second case, an offline solution has $\lceil
\frac{N-X_k-Y}{k-1} \rceil$ bins, each with one item of size
$1-k\delta$ and $k-1$ small first phase items, and $\lceil
\frac{X_k+Y}{k} \rceil$ bins with $k$ large first phase items (for
each one of these two bin types, the last bin may have a smaller
number of such items).  Indeed the last solution is an optimal solution though we will only use that it is a feasible solution.

In the first case, the algorithm cannot use the bins that already
have $k$ items for packing second phase items, and its cost is at
least $X_k+\lceil \frac{N}{k-1} \rceil \geq X_k+\frac{N}{k-1}$. In
the second case, the algorithm cannot use any of its bins to pack
any second phase item, as each bin has a large first phase item of
size above $\alpha$, so its cost is $$X_k+Y+\left\lceil
\frac{N-X_k-Y}{k-1} \right\rceil \geq X_k+Y+\frac{N-X_k-Y}{k-1} \ . $$

We call the two inputs (of the two cases) $I_1$ and $I_2$.
Obviously, since the input consists of more than $N$ items,
$OPT(I_1) = \Omega(\frac{N}k)$ and $OPT(I_2) = \Omega(\frac{N}k)$
hold. Letting $N=kn$ provides an input $I^n$ as required. By Lemma \ref{modcr}, we will analyze modified competitive ratios of the form $\frac{ALG(I)+C_a}{OPT(I)-C_b}$ for fixed constants $C_a$ and $C_b$.

For the input $I_1$, $OPT(I_1)-1 \leq \frac{N}{k-1}$ and $ALG(I_1) \geq X_k+\frac{N}{k-1}$.
For the input $I_2$, $OPT(I_2)-2 \leq \frac{N-X_k-Y}{k-1}+\frac{X_k+Y}{k}$ and $ALG(I_2) \geq X_k+Y+\frac{N-X_k-Y}{k-1}$.

First, we analyze the competitive ratio $r$ for input $I_2$ and
show that it tends to $2$ as $k$ grows to infinity. Let $Z=X_k+Y$.
We have $OPT(I_2)-2 \leq \frac{N-Z}{k-1}+\frac{Z}{k}$ and
$ALG(I_2) \geq Z+\frac{N-Z}{k-1}$. Thus, $r \geq
\frac{kZ(k-1)+k(N-Z)}{k(N-Z)+(k-1)Z}=\frac{Z(k^2-2k)+kN}{kN-Z}$.
Since $Z \geq \frac{N}k$ and the last lower bound on $r$ is a
ratio between an increasing function of $Z$ and a decreasing
function of $Z$, we conclude that by substituting $\frac Nk$
instead of $Z$ in the last bound, we achieve a valid lower bound
on $r$.  Thus, we have $r \geq \frac{N(k-2)+kN}{kN- \frac
Nk}=\frac{2-2/k}{1-1/(k^2)}=\frac{2k}{k+1}$ and the last bound
tends to $2$ when $k$ grows to infinity. By Lemma \ref{modcr}, the
overall (asymptotic) competitive ratio is at least $2$. Since
there is a $2$-competitive algorithm for any value of $k$
\cite{BCKK04} (even for the absolute competitive ratio
\cite{BDE}), we establish the following.

\begin{theorem}
The overall asymptotic and absolute competitive ratios for bin packing with cardinality constraints are equal to $2$.
\end{theorem}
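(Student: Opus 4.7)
The plan is to construct, for each large integer $N$ and each value of the cardinality bound $k$, an adaptive two-phase input yielding a competitive ratio arbitrarily close to $\frac{2k}{k+1}$, which tends to $2$ as $k\to\infty$. For the first phase I would invoke procedure SMALL with $\eps=1$, letting condition $C_2$ be ``the item is packed into an empty bin'' and $C_1$ be ``the item is packed into a non-empty bin'', and never trigger $C_3$, so that exactly $N$ items are produced. By the corollary of SMALL, every item has size at most $\Delta<1/k^3$, every small item has size at most some $\delta$, and every large item has size strictly greater than $\alpha := k\delta$; by construction the first item placed in any algorithm bin is a large item, so every algorithm bin contains an item of size exceeding $\alpha$. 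Let $X_k$ and $Y$ count, respectively, the algorithm's bins with exactly $k$ first-phase items and the remaining non-empty bins after phase one, and set $Z := X_k+Y$.

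For the second phase I would let the adversary choose one of two continuations. In $I_1$ append $\lceil N/(k-1)\rceil$ items of size $1-k\Delta$; an offline solution pairs each such item with any $k-1$ first-phase items, feasible because $1-k\Delta+(k-1)\Delta<1$, giving $OPT(I_1)\le \lceil N/(k-1)\rceil+1$, whereas the algorithm cannot reuse its $X_k$ full bins and each new item requires a fresh bin since $1-k\Delta+\Delta>1$, so $ALG(I_1)\ge X_k+\lceil N/(k-1)\rceil$. In $I_2$ append $\lceil(N-Z)/(k-1)\rceil$ items of size $1-k\delta$; an offline solution combines each with $k-1$ small first-phase items (feasible since $k\delta+(k-1)\delta<1$) and packs the large first-phase items in $k$-tuples, giving $OPT(I_2)\le\lceil(N-Z)/(k-1)\rceil+\lceil Z/k\rceil$, while the algorithm cannot reuse any of its $Z$ bins because each already contains an item of size above $k\delta$, so $ALG(I_2)\ge Z+\lceil(N-Z)/(k-1)\rceil$.

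The main step is the $I_2$ analysis: the ratio $ALG(I_2)/OPT(I_2)$ is bounded below by $\frac{Z(k^2-2k)+kN}{kN-Z}$, whose numerator is increasing and denominator decreasing in $Z$. Since the $N$ first-phase items fill $Z$ bins of at most $k$ items each, $Z\ge N/k$, and plugging in $Z=N/k$ yields $\frac{2k}{k+1}$. Applying Lemma~\ref{modcr} with $C_a=0$ and $C_b=2$ turns this into a valid lower bound on the asymptotic competitive ratio for every fixed $k$; taking $k\to\infty$ pushes it to $2$. Combined with the matching $2$-competitive algorithms from \cite{BCKK04,BDE}, this pins down both the asymptotic and the absolute competitive ratios at $2$. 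The conceptual crux, and what I expect to be the main obstacle absent procedure SMALL, is producing a \emph{multiplicative} gap of factor more than $k$ between small and large first-phase items: this gap is precisely what simultaneously permits OPT to place a $(1-k\delta)$-item alongside $k-1$ small items while forbidding the algorithm from adding such an item to any bin already containing a large first-phase item, and it is the feature that previous lower bound constructions for $k>2$ lacked.
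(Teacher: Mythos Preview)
Your proof is essentially the paper's own: the same first phase via SMALL with $C_2$ = ``empty bin'' and $C_1$ = ``non-empty bin'', the same two second-phase options $I_1,I_2$, and the same computation from $I_2$ alone giving $r\ge \frac{Z(k^2-2k)+kN}{kN-Z}\ge \frac{2k}{k+1}\to 2$, followed by Lemma~\ref{modcr} and the matching upper bounds from \cite{BCKK04,BDE}.

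One slip worth fixing: in your $I_1$ analysis the inequality ``$1-k\Delta+\Delta>1$'' is false (it would mean $-(k-1)\Delta>0$), so your stated reason that each second-phase item needs a fresh bin does not hold---indeed a $(1-k\Delta)$-item can fit into any of the $Y$ non-full bins. The correct justification for $ALG(I_1)\ge X_k+\lceil N/(k-1)\rceil$ is simply that each second-phase item has size $1-k\Delta>\tfrac12$, so no two share a bin, and none of them can enter the $X_k$ full bins; hence the bins containing second-phase items are $\lceil N/(k-1)\rceil$ bins disjoint from those $X_k$. This does not affect the present theorem, since your lower bound of $2$ uses only $I_2$; in the paper $I_1$ is brought in only for the fixed-$k$ bounds of Theorem~\ref{thm6}.
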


To obtain a better lower bound on the asymptotic competitive ratio $r$ for a fixed value of $k \geq 3$, we use $I_1$ as well.
By $r \geq \frac{ALG(I_1)}{OPT(I_1)-1}\geq \frac{X_k+N/(k-1)}{N/(k-1)}$ we have $(k-1)X_k \leq (r-1)\cdot N$. By counting arguments, $N \leq kX_k+(k-1)Y$ holds, and we get $X_k \geq N - (k-1)Z$, and $(r-1)N \geq (k-1)X_k \geq (k-1)(N-(k-1)Z)=(k-1)N-(k-1)^2\cdot Z$. Rearranging gives $$Z \geq \frac{(k-r)N}{(k-1)^2} .$$ As we saw earlier, by using $I_2$ we have $r \geq \frac{Z(k^2-2k)+kN}{kN-Z}$, which is equivalent to $$Z (k^2-2k+r) \leq kN(r-1) .$$ Combining the lower bound and upper bound on $Z$ results in $$\frac{(k-r)N(k^2-2k+r)}{(k-1)^2} \leq kN(r-1) , $$ or equivalently
 $$r^2+r(k^3-k^2-2k)-(2k^3-4k^2+k) \geq 0 . $$
 Since $k^3-k^2-2k \geq 0$ holds for $k \geq 2$ and $2k^3-4k^2+k>0$ holds for $k \geq 2$, it is sufficient to find the (unique) positive root which is equal to $$\frac{2k+k^2-k^3+\sqrt{(k^3-k^2-2k)^2+4(2k^3-4k^2+k)}}{2} \ . $$ The last expression is a lower bound on $r$ and thus the following holds.
\begin{theorem}\label{thm6}
For any $k \geq 3$, the asymptotic competitive ratio for bin packing with cardinality constraints is at least $$\frac{2k+k^2-k^3+\sqrt{k^6-2k^5-3k^4+12k^3-12k^2+4k}}2 \ . $$
\end{theorem}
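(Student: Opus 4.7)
The plan is to combine the two adversarial inputs $I_1$ and $I_2$ already constructed in the preceding paragraphs, extracting one inequality from each and then eliminating the unknown $Z = X_k + Y$ to obtain a quadratic inequality in the competitive ratio $r$. The underlying intuition is that $I_1$ punishes the algorithm for completing many bins to cardinality $k$ (since those bins are then unusable for the large second-phase items of size $1-k\Delta$), while $I_2$ punishes the algorithm whenever it opens any bin at all (since the items of size $1-k\delta$ collide with every large first phase item). Together these two options pin the adversary from both sides, so neither $X_k$ nor $Y$ can be small without paying a price.

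First I would record the two inequalities implied by $I_1$ and $I_2$ via Lemma \ref{modcr}. From $I_1$ one gets $(k-1)X_k \leq (r-1)N$; combining this with the counting bound $N \leq k X_k + (k-1)Y$ yields $X_k \geq N - (k-1)Z$ and hence
$$Z \geq \frac{(k-r)N}{(k-1)^2}.$$
From $I_2$ one extracts
$$Z\bigl(k^2 - 2k + r\bigr) \leq kN(r-1).$$
Eliminating $Z$ between these two estimates and clearing the positive factor $(k-1)^2$ reduces everything to
$$r^2 + r(k^3-k^2-2k) - (2k^3-4k^2+k) \geq 0.$$
Then I apply the quadratic formula: for $k \geq 2$ the coefficient $k^3 - k^2 - 2k = k(k-2)(k+1)$ is nonnegative and the constant $-(2k^3-4k^2+k) = -k(2k^2-4k+1)$ is negative, so the quadratic has exactly one positive root, and this positive root is the desired lower bound on $r$.

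The final step is the polynomial identity for the discriminant: expanding $(k^3-k^2-2k)^2$ gives $k^6-2k^5-3k^4+4k^3+4k^2$, and adding $4(2k^3-4k^2+k)=8k^3-16k^2+4k$ produces exactly $k^6 - 2k^5 - 3k^4 + 12k^3 - 12k^2 + 4k$, matching the form stated in the theorem. I do not foresee a serious obstacle, since the difficult conceptual work is packaged in Procedure SMALL and in the duality between $I_1$ and $I_2$; what remains is bookkeeping, namely the appeal to Lemma \ref{modcr} to absorb the additive $O(1)$ slack in $OPT(I_1)$ and $OPT(I_2)$, together with the straightforward polynomial computation. The only mildly delicate point is verifying that the inequality on $r$ is correctly oriented (i.e., the quadratic forces $r$ above the larger root rather than trapping it between the two roots), which is immediate from the sign pattern of the coefficients.
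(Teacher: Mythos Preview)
Your proposal is correct and follows essentially the same route as the paper: derive $(k-1)X_k \leq (r-1)N$ from $I_1$, combine with $N \leq kX_k + (k-1)Y$ to get $Z \geq (k-r)N/(k-1)^2$, extract $Z(k^2-2k+r) \leq kN(r-1)$ from $I_2$, and eliminate $Z$ to obtain the quadratic in $r$. The discriminant expansion and the sign-of-coefficients argument for locating $r$ above the positive root are exactly what the paper does as well.
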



The last lower bound is equal to approximately $1.54983$ for
$k=3$, $1.63330$ for $k=4$, $1.69047$ for $k=5$, $1.73214$ for
$k=6$, $1.76388$ for $k=7$, $1.78888$ for $k=8$, $1.80909$ for
$k=9$, and $1.82575$ for $k=10$. For $k=2$ the resulting lower
bound is $\sqrt{2}$ and the construction (for the case $k=2$) is
indeed similar to that of \cite{Blitz,BCKK04}.

\section{Better lower bounds for some small values of
$\boldsymbol{k}$}\label{fixd}

In this section we prove the next theorem that improves the resulting bounds of Theorem \ref{thm6} for these values of $k$.

\begin{theorem}
The following values are lower bounds on the asymptotic
competitive ratio.
\begin{itemize}
\item Approximately $1.42857$ for $k=2$ (the exact value of this
lower bound is $\frac {10}7$),
\item approximately $1.55642$ for
$k=3$,
\item approximately $1.69776$ for $k=5$,
\item approximately $1.74093$ for $k=6$,
\item approximately $1.77223$ for $k=7$, \item approximately $1.79634$ for $k=8$, \item approximately $1.81563$ for $k=9$, \item and approximately $1.83148$ for $k=10$.
\end{itemize}
\end{theorem}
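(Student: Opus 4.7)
The plan is to strengthen the two-branch construction of Theorem~\ref{thm6} by (i) exposing more structural information about the algorithm's packing via additional phases, and (ii) branching among more than two possible continuations of the input. Each additional branch yields an extra inequality on the competitive ratio $r$, and the overall lower bound is obtained by optimizing a small linear program whose variables count the different types of bins opened by the algorithm.

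First, I would repeat the tiny-item phase exactly as in Section~\ref{mainres}, invoking procedure SMALL with $\eps=1$ and $N$ very large. In Section~\ref{mainres} only the coarse split $(X_k,Y)$ was tracked; here I would let $Y_j$ denote the number of bins that ultimately contain exactly $j$ tiny items and refine each $Y_j$ further by the number of ``large'' (packed-alone) tiny items inside. The multiplicative gap produced by SMALL guarantees that a single large tiny item in a bin, together with items of the later phases, will constrain what further items can still be added to that bin.

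Next, for $k \geq 3$, I would insert one or more intermediate phases using procedure SMALLandLARGE. The alternation between parts of sizes $a_i$ and $1-a_i$ produces items that, by the multiplicative-gap corollaries, either can be appended to existing bins with both remaining capacity and remaining cardinality, or are large enough to force a new bin (or to be incompatible with bins that already contain a ``large'' tiny item). The conditions $C_1$ and $C_2$ at each part of SMALLandLARGE would be chosen to detect the algorithm's choice between these options, so that each part refines the bin-type counters further. Finally, the adversary terminates the construction by presenting one of several possible batches of closing items, each chosen so that its items are packable only in a specific subset of the refined bin types. Each such termination, combined with Lemma~\ref{modcr}, produces a constraint of the form $r \geq \frac{ALG(I)+C_a}{OPT(I)-C_b}$ in terms of the refined counters.

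The resulting system consists of item-counting constraints plus one constraint per branch, and minimizing $r$ subject to these is a small linear program whose dual solution yields the closed-form lower bound. For $k=2$ the construction simplifies considerably: a single SMALL phase followed by a two-branch closing phase should reproduce the bound $\frac{10}{7}$, mirroring and tightening the argument of~\cite{FK13}. The main obstacle will be designing, for each specific $k \in \{3,5,6,7,8,9,10\}$, the right number of intermediate SMALLandLARGE parts and the right closing batches so that the LP optimum matches the stated numerical value. Because the number of possibly useful bin categories grows with $k$, the LP grows too; the design requires identifying which bin distinctions are worth tracking, since unnecessary distinctions add variables without tightening the bound while overlooked distinctions prevent tightness altogether. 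I expect that for $k=4$ no such refinement improves on Theorem~\ref{thm6}, which is why $k=4$ is absent from the statement.
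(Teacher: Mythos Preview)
Your high-level plan---refine the bin counters beyond $(X_k,Y)$, add further branches, and solve a resulting mathematical program---is the right shape, but the concrete allocation of tools is almost exactly inverted relative to what the paper does, and in the $k=2$ case this leads to a genuine gap.

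For $k=2$ you assert that ``a single SMALL phase followed by a two-branch closing phase'' already yields $\tfrac{10}{7}$. It does not: that is essentially the construction analyzed at the end of Section~\ref{mainres}, and it gives only $\sqrt{2}\approx 1.414$. The paper's $k=2$ proof is in fact the \emph{most} elaborate of the three cases, not the simplest. It uses SMALL, then a batch of items of size $1-u$, and then a third phase built with SMALLandLARGE consisting of up to $2N+1$ alternating parts; the analysis tracks eight bin types ($t^\pm$, $s^\pm$, $b^\pm$ items and their combinations) and is carried out by a direct chain of inequalities rather than an LP. The key device you are missing is that the unbounded alternation in SMALLandLARGE forces the algorithm either to keep opening fresh bins or to exhaust the supply of bins that can absorb $b^-$-items, and this is what pushes the bound past $\sqrt{2}$ to $\tfrac{10}{7}$.

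Conversely, for $k\ge 5$ the paper does \emph{not} use SMALLandLARGE at all. It keeps the two inputs $I_1,I_2$ of Section~\ref{mainres}, refines the bin counts only to $(X_k,X_{k-1},Y')$, and adds a single third option $I_3$ whose extra phase consists of plain items of size $0.35$; a two-case mathematical program (quadratic in $r$, not linear) then gives the stated values. For $k=3$ the extra branch $J_4$ uses procedure LARGE, not SMALLandLARGE, together with two fixed-size closing batches $J_2,J_3$. So your proposal to drive the $k\ge 3$ improvements through SMALLandLARGE, while reserving a trivial argument for $k=2$, reverses where the heavy machinery is actually needed.
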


\subsection{$\boldsymbol{k=3}$}

Given a positive integer $N$, let $\eps = 3^{- 2^{2N+4}}$ (where
$\eps < \frac 1{1000}$). Use SMALL to present exactly $N$ items,
such that $C_1$ is defined as the condition that the new item is
packed into a bin that already has at least one item and $C_2$ is
the condition that the new item is packed into an empty bin.
Recall that all item sizes are no larger than $\eps$. This will be
called the first phase of the input and its items will be called
first phase items.


Let $u_1$ be the size of the largest small item and let $u_2$ be
the size of the smallest large item, where $u_2 > k \cdot u_1$.
Let $X_{\ell}$ denote the number of bins with $\ell$ items packed
by the algorithm in the first phase (for $1 \leq \ell \leq 3$). If
the input is stopped at this time, we call the input $J_1$. As
$OPT(J_1) = \lceil \frac N3 \rceil$, we have $OPT(J_1)-1 \leq
\frac{N}3$ and $ALG(J_1)=X_1+X_2+X_3$. Thus, for the competitive
ratio $r$ of the set of inputs we define for $N$, $X_1+X_2+X_3
\leq r \cdot \frac{N}3$, while $X_{\ell} \geq 0$ holds for
$\ell=1,2,3$, and by $X_1+2X_2+3X_3 = N$, we have $X_1+X_2+X_3
\leq r \cdot \frac{X_1+2X_2+3X_3}3$.

There are three possible continuations to the input, resulting in
inputs $J_2$, $J_3$, and $J_4$. The second phase of $J_2$ consists
of $\lceil \frac N2 \rceil$ (second phase) items, each of size
$1-2\eps>\frac 12$. Since any first phase item has size of at most
$\eps$, and any second phase item requires a separate bin (where
it can be packed with two first phase items possibly with one in
the last such bin, if $N$ is odd), we have $OPT(J_2) -1 \leq \frac
N2$. The algorithm cannot pack any additional item in a bin
containing three items, so $ALG(J_2) \geq X_3 + \lceil \frac N2
\rceil \geq X_3 + \frac N2$. Thus, $X_3 + \frac N2 \leq r \cdot
\frac N2$, or equivalently $X_1 + 2 X_2+5 X_3 \leq r \cdot
(X_1+2X_2+3X_3)$.

The second phase of $J_3$ consists of $\lceil
\frac{N-X_1-X_2-X_3}2 \rceil$ (second phase) items, each of size
$1-2 \cdot u_1 > \frac 12$. One feasible solution would be to pack
$\lceil \frac{X_1+X_2+X_3}3 \rceil$ bins containing (at most)
three large first phase items, and $\lceil \frac{N-X_1-X_2-X_3}2
\rceil$ bins containing one second phase item and (at most) two
small first phase items. This solution is valid since any three
first phase items can be packed into a bin (and the number of
large first phase items is exactly $X_1+X_2+X_3$), and
additionally, any two small first phase items have total size of
at most $2 \cdot u_1$. Thus, $OPT(J_3)-2 \leq
\frac{X_1+X_2+X_3}3+\frac{N-X_1-X_2-X_3} 2 = \frac N2 - \frac
{X_1+X_2+X_3}6$. The algorithm has one large first phase item (of
size above $3u_1$) packed into each bin used for first phase
items. Thus, it packs each second phase item into a new bin. We
get $$ALG(J_3) \geq X_1+X_2+X_3+ \frac{N-X_1-X_2-X_3}2=\frac N2
+\frac{X_1+X_2+X_3}2 \ . $$ Thus, $ \frac N2 + \frac{X_1+X_2+X_3}2
\leq r \cdot (\frac N2 - \frac{X_1+X_2+X_3}6)$, or equivalently
$6X_1+9X_2+12X_3 \leq r \cdot (2X_1+5X_2+8X_3)$.

Next, to define $J_4$, let $\eps'=1$, and use LARGE to define the
second phase consisting of second phase items as follows. The
number of items will be at most $2N$, and the size of each item
will be in $(1-3^{- 2^{2N+2}} , 1-3^{-2^{2N+3}})$. The size of
each item is above $\frac 12$, and it can be packed with any one
or two first phase items. Condition $C_1$ is that the item is
packed into a new bin, $C_2$ is that the item is not packed into a
new bin, and $C_3$ is that the number of items satisfying $C_1$ is
exactly $\lceil \frac N2 \rceil$. Since $X_1+X_2+X_3 \leq
X_1+2X_2+3X_3=N$, the total number of  second phase items is at
most $X_1+X_2+\lceil \frac N2 \rceil \leq 2N$, so the upper bound
on the number of items is sufficiently large and $C_3$ happens before reaching the bound of $2N$. Let $d$ denote the
number of second phase items introduced by the algorithm that
satisfied $C_2$. Since all second phase items have sizes above
$\frac 12$, and a bin that already contains three items cannot
receive additional items, we find $d \leq X_1+X_2$. Let
$0<\zeta_1<\zeta_2<\frac 1{100}$ where $\zeta_2
> 3 \cdot \zeta_1$ be such that every  second phase item that
satisfied $C_1$ has size no smaller than $1-\zeta_1$ and every
item satisfying $C_2$ has size no larger than $1-\zeta_2$. The
third phase consists of $2d$ (third phase) items, each of size
$\frac{\zeta_2}2$ (where $\frac{\zeta_2}2 > \zeta_1$). An optimal
solution packs the items as follows. Each second phase item
satisfying $C_2$ is packed with two third phase items. Each second
phase item satisfying $C_1$ (recall that there are $\lceil \frac
N2 \rceil$ such items) is packed with two first phase items (there
is at most one bin with one first phase item). The solution is
feasible as the size of any second phase item is in $(1-3^{-
2^{2N+2}} , 1-3^{-2^{2N+3}})$, any two first phase items have a
total size of at most $2 \cdot 3^{- 2^{2N+4}} < 3^{- 2^{2N+3}}$,
and those second phase items packed with third phase items have
sizes no larger than $1-\zeta_2$. This solution is optimal since
its cost is equal to the number of second phase items, and their
sizes are above $\frac 12$. Thus, $OPT(J_4) -1 \leq d+\frac N2$.
Note that the number of items in $J_4$ is at least $3\cdot
(d+\frac N2)$.  The algorithm packs every item of the second phase
satisfying $C_1$ into a separate bin that will not receive
additional items (and there are at least $\frac N2$ such bins).
When such an item arrives, it is packed into a new bin, and its
size is above $\frac 12$ and at least $1-\zeta_1$, while any
additional second phase item has a size above $\frac 12$, and any
third phase item has a size above $\zeta_1$. Other items can be
packed in triples in the best case (and there are $3d+N$ such
items). Therefore, $ALG(J_3)\geq \frac{N+3d}3+\frac
N2=d+\frac{5N}6$. We find $$5X_1+10X_2+15X_3+6d \leq
r(3X_1+6X_2+9X_3+6d) \ .$$

Solving the mathematical (quadratic) program (MP) for minimizing $r$ subject to the above constraints gives $r \geq 1.55642$.



\subsection{$\boldsymbol{k\geq 5}$}
Here, we follow the construction of Section \ref{mainres} of introducing either $I_1$ or $I_2$, but we add a third option as follows.
Let $X_{k-1}$ denote the number of bins packed with $k-1$ items after the first phase, and let $Y'$ be the number of bins packed with at most $k-2$ items ($X_k$ will still denote the number of bins packed with exactly $k$ items after the first phase). Thus, $Y=X_{k-1}+Y'$ is the number of bins that can accommodate an additional item. We have $N \leq kX_k+(k-1)X_{k-1}+(k-2)Y'$.
Input $I_3$ is similar to $I_2$ in the sense that its first two phases are the same, but it also has a third phase in which $\lambda_3= 2 \lceil \frac{Y'+X_{k-1}+X_{k}}{k-2} \rceil$ (third phase) items, each of size $0.35$, are presented. The sizes of the first phase items are such that every second phase item has a size above $0.65$, and every $k-2$ first phase items have a total size below $0.3$.

Bins of the solution returned by the algorithm with second phase items are separate bins containing one item each at termination.
Bins with $k$ items (packed with first phase items) cannot receive additional items. A bin with $k-1$ first phase items can receive one third phase item, and a bin with $k-2$ or less first phase items can receive at most two third phase items, due to the sizes of third phase items (which are above $\frac 13$). Thus, at most $X_{k-1}+2Y'$ third phase items can be added to non-empty bins, and the remaining $\max\{0, \lambda_3- X_{k-1}-2Y'\}$ items require new bins.

{\bf Case 1:} Assume that $\lambda_3 \leq  X_{k-1}+2Y'$.  In this
case, the items of the third phase are not presented. For this
case we use a mathematical program with the objective of
minimizing $r$ with non-negativity constraints of all variables,
the condition defining this case, and the constraints $ALG(I_1)
\leq r \cdot (OPT(I_1)-1)$ and $ALG(I_2) \leq r \cdot
(OPT(I_2)-2)$. Letting $\lambda_1=\lceil \frac{N}{k-1} \rceil$ and
$\lambda_2=\lceil \frac{N-X_k-Y}{k-1} \rceil$, the last two
constraints are $X_k+\lambda_1 \leq r \cdot \frac{N+\lambda_1}k$
and $X_k+Y+\lambda_2 \leq r \cdot \frac{N+\lambda_2}k$ (in the
mathematical program we will relax the equality constraints
defining $\lambda_1,\lambda_2,\lambda_3$ into valid inequality
constraints). Thus, the constraints are  the non-negativity
constraints $X_{k-1},X_k, Y',Y, \lambda_1,\lambda_2,\lambda_3 \geq
0$, the constraints defining the auxiliary variables
$Y=X_{k-1}+Y'$, $N \leq kX_k+(k-1)X_{k-1}+(k-2)Y'$, $\lambda_1
\geq \frac{N}{k-1}$, $\lambda_2 \geq \frac{N-X_k-Y}{k-1}$,
$\lambda_3 \geq 2 \cdot \left( \frac{Y'+X_{k-1}+X_{k}}{k-2}
\right)$, the constraint defining this case $\lambda_3 \leq
X_{k-1}+2Y'$, and the last two constraints saying that the
algorithm is $r$-competitive for the two inputs $I_1$ and $I_2$,
that is, $X_k+\lambda_1 \leq r \cdot \frac{N+\lambda_1}k$ and
$X_k+Y+\lambda_2 \leq r \cdot \frac{N+\lambda_2}k$.

{\bf Case 2:} For the other case, i.e., $\lambda_3 \geq
X_{k-1}+2Y'$ we use a similar mathematical program with all these
constraints, with the change that the constraint for $\lambda_3$
is $\lambda_3 \geq  X_{k-1}+2Y'$ (instead of $\lambda_3 \leq
X_{k-1}+2Y'$ in the mathematical program for case 1), and that a
constraint for $I_3$ is added, as we get another constraint from
$ALG(I_3) \leq r \cdot (OPT(I_3)-2)$.

The new constraint for $I_3$ is based on the following reasoning. The
offline solution we consider packs the small first phase items and
the second phase items as before, while every $k-2$ large first
phase items are packed with two third phase items (one bin may
contain a smaller number of large first phase items). Thus,
$OPT(I_3)-2 \leq \frac{N+\lambda_2+\lambda_3}{k}$. The algorithm
has $X_k+X_{k-1}+Y'$ bins created for the first phase items,
$\lambda_2$ bins created for the second phase items, and at least
$\lceil \frac{\lambda_3-X_{k-1}-2Y'}2 \rceil$ bins for the third
phase items (which could not be packed into non-empty bins, as
only $X_{k-1}+2Y'$ third phase items can be packed there, and the
remaining third phase items can be packed in pairs in the best
case). Thus, $$ALG(I_3) \geq X_k+X_{k-1}+Y'+\lambda_2+
\frac{\lambda_3-X_{k-1}-2Y'}2=X_k+\frac{X_{k-1}}2+\lambda_2+\frac{\lambda_3}2
\ .$$ Thus, the constraint that we add to the mathematical
program corresponding to the input $I_3$ is
$X_k+\frac{X_{k-1}}2+\lambda_2+\frac{\lambda_3}2 \leq r \cdot
\frac{N+\lambda_2+\lambda_3}{k}$.

The lower bound on $r$ results from the minimum of the solutions of the two mathematical programs stated above. This gives the following lower bounds on the competitive ratio:  approximately $1.69776$ for $k=5$, $1.74093$ for $k=6$, $1.77223$ for $k=7$, $1.79634$ for $k=8$, $1.81563$ for $k=9$, and $1.83148$ for $k=10$.


\begin{remark}
  Another way to define the third phase is to present items of sizes slightly above $\frac 14$ ($3 \lceil \frac{Y+X_{k}}{k-3} \rceil$ such items) instead of items of sizes slightly above $\frac 13$. Analyzing this additional scenario results in slightly better results for $k=8,9,10$ (more specifically, a lower bound of approximately $1.79758$ on the competitive ratio for $k=8$, $1.81812$ for $k=9$, and $1.83442$ for $k=10$). Since the improvement is not significant we omit the details.
\end{remark}

\subsection{$\boldsymbol{k=2}$}


In this section we prove an improved lower bound of $\frac{10}7\approx 1.42857$ for $k=2$. 
In previous constructions procedures of the type presented in
Section \ref{stru} were used. In \cite{Blitz,BCKK04}, items are
presented in just two phases, that is, only one such construction
is used. In \cite{FK13}, there are additional phases, but they are
different from what we present below in the sense that their
number is fixed in advance. Our construction allows a shorter
analysis avoiding linear programs.

Given a positive integer $N$, let $\eps = 2^{- 2^{5N+4}}$ (where
$\eps < \frac 1{1000}$). Use SMALL to present exactly $N$ items,
such that $C_1$ is defined as the condition that the new item is
packed into a bin that already has an item and $C_2$ is the
condition that the new item is packed into an empty bin. Recall
that all item sizes are no larger than $\eps$. This will be called
the first phase of the input and its items will be called first
phase items.

Let $u_1$ be the size of the largest small item and let $u_2$ be
the size of the smallest large item, where $u_2 > u_1$. Let
$u=\frac{u_1+u_2}2$. First phase items that satisfied $C_1$ will
be called $t^-$ items, and first phase items that satisfied $C_2$ will be
called $t^+$-items. Any $t^-$-item has size of at most $u_1<u$ and
any $t^+$-item has size of at least $u_2>u$. Let $X_{\ell}$ denote
the number of bins with $\ell$ items packed by the algorithm in
the first phase (for $\ell =1,2$). The input continues with $X_2$
(second phase) items, each of size $1-u$ (where $1-u > \frac 12$).
Since every bin of the algorithm has a $t^+$-item, the algorithm
will pack every item of size $1-u$ into a separate new bin.  If
the input is stopped at this time, we call the input $J_1$. For an
offline solution, it is possible to pack every $t^-$-item into a
bin together with an item of size $1-u$, and $t^+$-items are
packed in pairs into bins (one bin may contain only one such
item). Thus, $OPT(J_1) \leq X_2 + \lceil \frac {X_1+X_2}2 \rceil$,
we have $OPT(J_1)-1 \leq \frac{X_1+3X_2}2$. We also have
$ALG(J_1)=X_1+2X_2$. Thus, for the competitive ratio $r$ of the
set of inputs we define for $N$, $X_1+2X_2 \leq r \cdot
\frac{X_1+3X_2}2$. Due to the number of items introduced so far,
$X_1+2X_2= N$, so we get $2N \leq r \cdot (N+X_2)$. If for an
infinite number of values $N$, it holds that $X_2 \leq
\frac{2N}5$, we find $2N \leq r \cdot (N+X_2) \leq r \cdot 1.4N$,
proving $r \geq \frac{2}{1.4}=\frac{10}7$. Thus, we focus on the
case $X_2 \geq \frac{2N}5$ (and therefore $X_1=N-2X_2 \leq \frac
N5$).

The third phase will consist of at most $2N+1$ parts. The total
number of items that will be presented (during all  parts of this
phase) is at most $5N$, and they will be presented using procedure
SMALLandLARGE. The procedure is applied once with $\eps'=1$. For
any $i$ such that $a_i$ will be defined, we will have $\eps <
2^{-2^{5N+3}}< a_i < 2^{-2^{5N+2}}< \frac 12$, while any first
phase item has size at most $\eps$. Thus, an item of size $1-a_i$
and a first phase item have a total size of at most
$\eps+1-a_i<1$.

The conditions $C_1$ and $C_2$ are as follows. In parts of the
phase of even indices, $C_1$ is defined as the condition that the
new item is packed into a bin that already has an item and $C_2$
is the condition that the new item is packed into an empty bin. In
parts of the phase of odd indices, $C_2$ is defined as the
condition that the new item is packed into a bin that already has
an item and $C_1$ is the condition that the new item is packed
into an empty bin. Moreover, if an item of index $i$ (in the
phase) is presented in part $t$ of the phase, its size is
$\mu_i=a_i$ if $t$ is even and its size is $\mu_i=1-a_i$ if $t$ is
odd. Note that any item of an odd indexed part of the phase has
size above $\frac 12$, and any item of an even indexed part has
size below $\frac 12$. We will call the items of even indexed
parts that satisfied $C_1$:  $s^-$-items, and we call the items of
even indexed parts that satisfied $C_2$: $s^+$-items. We will call
the items of odd indexed parts that satisfied $C_2$: $b^-$-items,
and we call the  items of odd indexed parts that satisfied
$C_1$: $b^+$-items. For a part of index $z$, let $\nu_z$ denote
the required number of $b^+$-items of this part, if $z$ is odd,
and let $\nu_z$ denote the required number of $s^+$-items of this
part, if $z$ is even. Let $\nu'_z$ denote the number $b^-$-items
of part $z$, if $z$ is odd, and let $\nu'_z$ denote the number of
$s^-$-items of this part, if $z$ is even. The values $\nu'_z$
result from the action of the algorithm in part $z$, while the
values $\nu_z$ are defined based on its action in earlier phases and
earlier parts as follows. For any $z$, let $\nu_z=X_2$, if $z=1$, and
otherwise $\nu_z=\nu'_{z-1}$. If $\nu_{z}=0$, no items are
presented, and the input was stopped already.

Part $z$ lasts until there are $\nu_z$ items that are $b^+$-items,
if $z$ is odd, or until there are $\nu_z$ items that are
$s^+$-items, if $z$ is even. Note that these are exactly the items
that are packed into new bins. If at some time the number of third
phase items reaches the value $5N$, the input is stopped. We will
later show that this never happens, and $\nu_z$ items that are
either $b^+$-items or $s^+$-items are presented as required. The
value $\nu'_z$ is defined to be the number of other items
presented during this part, which are exactly the items that are
packed by the algorithm as second items into non-empty bins. If
$\nu'_z=0$, the input is stopped, and otherwise, the construction
continues to the next part. We will also show that the input is
stopped after at most $2N+1$ parts.

%
%
%
%


For the input $J_2$, consisting of all three phases, we use the
following variables. Let $x=2X_2$ and $y=X_1$. Given the output of
the algorithm, let $z$ denote the number of bins where the first
item is a third phase item (by definition, this item is a
$b^+$-item or a $s^+$-item). Out of these bins, let $z_1$ be the
number of bins with one $b^+$-item, let $z_2$ be the number of
bins with two items (these must be an $s^+$-item and an
$s^-$-item), and let $z_3$ be the number of bins
with one $s^+$-item. 
Out of the
bins with a $t^+$-item that do not have a $t^-$-item, let $y_1$
denote the number of bins that have a $b^-$-item, let $y_2$ denote
the number of bins that have an $s^-$-item, and let $y_3$ be the
number of bins that do not have another item.

By the action of SMALLandLARGE, the value $a_i$ is larger than any
value $a_j$ where $j<i$, if the item corresponding to $a_j$ is an
$s^-$-item or a $b^+$-item, and $a_i<a_j$, if the item
corresponding to $a_j$ is an $s^+$-item or a $b^-$-item.
Therefore, we find that any $b^+$-item can be packed with any
$t^+$-item or with any $s^-$-item of an earlier part of phase $3$,
and any $b^-$-item can be packed with any $s^+$-item of a later
part. However, a $b^+$-item cannot be packed with a $s^-$-item of
a later part. Moreover, a $b^-$-item of part $z$ cannot be packed
with any $s^+$-item of an earlier part. Thus, $b^-$-items can be
packed only into bins already containing $t^+$-items (they also
cannot be added to bins containing an item of size above $\frac
12$ and they cannot be packed into empty bins since in such a case
the item is defined to be a $b^+$-item). Therefore, the total
number of $b^-$-items in all parts cannot exceed $y \leq N$.

We show that the third phase can indeed be constructed. First, we
show that the process can be completed assuming that the upper
bound on the number of items is sufficiently large. Since for part
$z$, the required $\nu_z$ items are items that are packed into new
bins, and since at each time the number of non-empty bins is
finite, these items will be presented eventually. There can be at
most $N$ parts of odd indices in which at least one $b^-$-item is
presented. Thus, the construction must end after $N+1$ parts of
odd indices, if it is not stopped earlier, and the number of parts
is at most $2N+1$.

The number of $t^+$-items is $X_1+X_2=\frac x2+y$. The number of
$t^-$-items is $X_2=\frac x2$. The number of second phase items is
$\frac x2$. The number of $b^+$-items is equal to the number of
$t^+$-items plus the number of $s^-$-items, and it is equal to
$z_1=(\frac x2+ y)+y_2+z_2$. The number of $b^-$-items is equal to
the number of $s^+$-items, and it is equal to $y_1=z_2+z_3$. Thus,
the total number of items in phase $3$ is at most
$$y_1+z_1+(z_2+z_3)+(y_2+z_2)=2y_1+z_1+y_2+z_2$$ $$=2y_1+\frac
x2+y+y_2+z_2+y_2+z_2 \leq 3y +\frac x2 + 2(y_1-z_3) \leq 5y+\frac
x2 \leq 5N \ .$$

It is possible to create an offline solution of cost $\frac
x2+y_1+z_1$ due to the following. Any $t^-$-item can be packed
with any second phase item, and packing such pairs creates
$X_1=\frac x2$ bins. Each $t^+$-item is packed with a $b^+$-item
of the first part of the third phase (the numbers of such items
are equal to $\nu_1$), each $s^-$-item of part $z$ is packed with
a $b^+$-item of part $z+1$ (the number of such items are
$\nu'_z=\nu_{z+1}$) and each $b^-$-item of part $z$ is packed with
a $s^+$-item of part $z+1$ (the number of such items are
$\nu'_z=\nu_{z+1}$). Thus, the number of additional bins is
exactly the number of $b^+$-items and $b^-$-items, which is
$y_1+z_1$. Thus, $OPT(J_2) \leq \frac x2+y_1+z_1$.

As for the algorithm, it can never use bins with $b^+$-items to
pack other items (note that these items are packed first into
bins, by definition), so there are no additional bins except for
the eight types described above. Thus $ALG(J_2) = x+y+z$.

We get $$r-1 \geq \frac{ALG(J_2)}{OPT(J_2)}-1 \geq
\frac{x+y+z}{\frac x2 +y_1+z_1}-1=
\frac{x/2+y_2+y_3+z_2+z_3}{\frac x2 +y_1+z_1}$$ $$=\frac{\frac
x2+y_2+y_3+y_1}{\frac x2 +y_1+\frac x2+ y+y_2+y_1-z_3} \geq
\frac{\frac x2+y}{x +3y} =\frac{x+2y}{2x+6y}=\frac
13+\frac{x}{6(x+3y)}\ . $$ Recall that $y \leq \frac N5$ and $x
\geq \frac{4N}5$. Thus $x>0$, and we will bound
$\frac{x}{x+3y}=\frac1{1+3y/x}$. We get $\frac yx \leq \frac 14$,
so $\frac1{1+3y/x}\geq \frac 47$, and $r-1 \geq \frac 37$, as
required.

\section{Conclusion}
As explained in the introduction, our results imply, in particular
a lower bound of $2$ on the competitive ratio for vector packing
in two or more dimensions. We can slightly improve this result
showing lower bounds above $2$ for vector packing with constant
dimensions (already for two dimensions).


%
%

\bibliographystyle{abbrv}

\end{document}